\documentclass[a4paper,UKenglish,cleveref, autoref, thm-restate]{lipics-v2021}
\pdfoutput=1 

\hideLIPIcs  


\usepackage{thm-restate}

\usepackage{xcolor}
\usepackage{stmaryrd}
\usepackage{tikz}
\usetikzlibrary{automata, positioning, arrows, petri, backgrounds}


\newcommand{\N}{\mathbb{N}}



\definecolor{niceredbright}{HTML}{bd0310}
\definecolor{nicebluebright}{HTML}{197b9b}
\definecolor{nicered}{HTML}{7f0a13}
\definecolor{niceblue}{HTML}{104354}
\definecolor{nicegreen}{HTML}{217516}
\definecolor{nicepurple}{HTML}{884bab}
\definecolor{nicebg}{HTML}{f6f0e4}
\definecolor{niceredlight}{HTML}{c9888d}
\definecolor{nicebluelight}{HTML}{78a4b8}
\definecolor{nicegreenlight}{HTML}{76de68}
\definecolor{nicepurplelight}{HTML}{bc87db}


\newcommand{\Prot}{\mathcal{P}} 



\DeclareMathOperator{\D}{\mathbb{D}}

\newcommand{\tikzcircle}[1]{\tikz\draw[#1,fill=#1] (0,0) circle (.5ex);}
\DeclareMathOperator{\ActiveA}{Act}
\DeclareMathOperator{\increment}{inc}
\DeclareMathOperator{\decrement}{dec}
\DeclareMathOperator{\goto}{goto}
\DeclareMathOperator{\nextA}{next}

\bibliographystyle{plainurl}

\title{Verification of Population Protocols with Unordered Data is undecidable} 


\author{Roland Guttenberg}{Technical University of Munich, Germany}{guttenbe@in.tum.de}{0000-0001-6140-6707}{}

\authorrunning{Roland Guttenberg} 

\Copyright{Roland Guttenberg} 

\begin{CCSXML}
<ccs2012>
<concept>
<concept_id>10003752.10003753.10003761.10003763</concept_id>
<concept_desc>Theory of computation~Distributed computing models</concept_desc>
<concept_significance>300</concept_significance>
</concept>
</ccs2012>
\end{CCSXML}

\ccsdesc[300]{Theory of computation~Distributed computing models} 

\keywords{Population Protocols, Unordered Data, Counter Machines, Reduction} 

\category{} 

\relatedversion{} 


\funding{The project has received funding from the European Research Council (ERC) under the European Union’s Horizon 2020 research and innovation programme under grant agreement No 787367.}

\acknowledgements{}

\nolinenumbers 

\EventEditors{John Q. Open and Joan R. Access}
\EventNoEds{2}
\EventLongTitle{42nd Conference on Very Important Topics (CVIT 2016)}
\EventShortTitle{CVIT 2016}
\EventAcronym{CVIT}
\EventYear{2016}
\EventDate{December 24--27, 2016}
\EventLocation{Little Whinging, United Kingdom}
\EventLogo{}
\SeriesVolume{42}
\ArticleNo{23}

\begin{document}

\maketitle

\begin{abstract}
We consider the new extension of population protocols with unordered data and show that the corresponding well-specification problem and therefore also other verification problems are undecidable.
\end{abstract}

\section{Introduction}
\label{Introduction}
Population protocols are a model of computation in which indistinguishable, mobile finite-state agents, randomly interact in pairs to decide whether their initial configuration satisfies a given property, modelled as a predicate on the set of all configurations~\cite{AADFP06}. The decision is taken by \emph{stable consensus}; eventually all agents agree on whether the property holds or not, and never change their mind again. 
Population protocols are very close to chemical reaction networks, a model in which agents are molecules and interactions are chemical reactions. In particular, Population Protocols have surpassed their original motivation of mobile sensor networks and are a well-studied model \cite{DotyE19, AAE08, AAER07, AADFP06, AADFP04, BlondinEGHJ20, CzernerGHE22, BlondinEJ18, AlistarhGV15, AlistarhG18, ElsasserR18, DotyEGSSU21, KosowskiU18}. However, the population protocol model has several variations: Existence of an initial leader to coordinate the computation \cite{AAE08}; a different notion of time when the protocol has finished \cite{KosowskiU18} which allows for faster protocols; as well as more than constant space \cite{AlistarhGV15, DotyEGSSU21}.

One of the main advantages of the main model of population protocols is that verification properties are decidable \cite{EsparzaGLM17}, even in the presence of leaders. On the other hand, with enough super-constant space, verification issues become undecidable. Hence this year Blondin \cite{Blondin23} considered another extension, which attempts to achieve a middle ground, and is the subject of this paper: By allowing for a read-only input of arbitrary size while preserving the fact that agents are finite-state, one might conjecture that verification properties stay decidable. However, in this paper we eliminate this hope by proving that well-specification is undecidable for the new model of population protocols with unordered data.

\section{Preliminaries}

In standard population protocols, finite-state agents interact in pairs in order to decide a property of their initial configuration by stable consensus subject to some fairness constraint. In the extension we consider in this paper, population protocols with unordered data, every agent in addition starts with read-only input data from some infinite domain \(\D\). Otherwise the semantics are the same.

\begin{definition}
A population protocol with unordered data (UDPP) over some infinite domain \(\D\) equipped with equality is a 4-tuple \(\Prot=(Q,\delta,I,O)\), where 
\begin{itemize}
\item \(Q\) is a finite set of states,
\item \(\delta \subseteq \N^{Q^2} \times \{=,\neq\} \times \N^{Q^2}\) is a finite set of transitions,
\item \(I \subseteq Q\) is a set of initial states.
\item \(O \colon Q \to \{0,1\}\) is an output mapping, assigning opinions to states.
\end{itemize}
\end{definition}

We call elements of the data domain \(\D\) colors. A configuration of a UDPP is a function \(Q \times \D \to \N\) with finite support, assigning to every state \(q\in Q\) and color \(d \in \D\) the number of agents which carry this combination. Finite support guarantees that configurations consist of finitely many agents. We call a state \(q\in Q\) active at a configuration \(C\) if there exists a color \(d \in \D\) such that \(C(q,d)\geq 1\). The set of active states is denoted \(\ActiveA(C)\). A configuration is initial if \(\ActiveA(C) \subseteq I\), i.e. all agents are in initial states, but potentially have different colors.

On configuration we define an addition via \((C+C')(q,d)=C(q,d)+C'(q,d)\), i.e. component-wise addition. Given a state \(q\in Q\) and a color \(d\in \D\) we denote by \(q_d\) the configuration \((q,d) \mapsto 1, (q',d') \mapsto 0\) otherwise. A transition \(t=((p,p'), \sim, (q,q')) \in \delta\) is enabled at a configuration \(C\) if there exist colors \(d \sim e\) such that \(C \geq p_d + p'_e\). The transition then leads to the configuration \(C'=C-(p_d+p'_e)+(q_d+q'_e)\), i.e. agents in states \(p,p'\) change to states \(q,q'\) without changing their color. By \(C \to C'\) we denote that \(C\) leads to \(C'\) via some transition \(t\in \delta\). We denote the reflexive and transitive closure of \(\to\) by \(\to^{\ast}\).

An execution \(\pi\) of a UDPP is a sequence \(\pi=(C_0,C_1, \dots)\) such that \(C_0\) is an initial configuration and \(C_i \to C_{i+1}\) for all \(i\). UDPPs accept by stable consensus, i.e. \(\pi\) has output \(b \in \{0,1\}\) if there exists \(n \in \N\) such that for all \(C'\) reachable from \(C_n\), we have \(\ActiveA(C') \subseteq O^{-1}(b)\), i.e. all states occurring in configuration \(C'\) have output \(b\). Observe that executions might have neither output \(1\) nor output \(0\), namely if, informally speaking, some agents forever disagree on the answer. Note that formally, we require a property about every reachable configuration; i.e. no matter what happens in the future, agents continue to believe in their current opinion. The execution \(\pi\) is fair if for every configuration \(C\), if \(C_i \to^{\ast} C\) for infinitely many \(i\), then \(C_i=C\) for infinitely many \(i\). I.e. if a configuration is infinitely often reachable, it is eventually reached. 

An initial configuration \(C_0\) has output \(b\) if every fair execution starting at \(C_0\) has output \(b\). If every initial configuration \(C_0\) has some output \(b=\varphi(C_0) \in \{0,1\}\), then the UDPP \(\Prot\) is called well-specified, and computes the predicate \(\varphi\).

\begin{example}
Consider \(\D:=\{\tikzcircle{red}, \tikzcircle{blue}, \tikzcircle{green}, \dots\}\). Define \(I=Q:=\{p,q\}\) and consider the transitions \(t_1=((p,q), \neq, (q,q))\) and \(t_2=((q,q),=, (p,q))\). Define \(O(p)=0\) and \(O(q)=1\). At the initial configuration \(C_0:=\{(p,\tikzcircle{red}) \mapsto 2, (q, \tikzcircle{blue}) \mapsto 1\}\) only \(t_1\) is enabled, and leads to \(C_1:=\{(p,\tikzcircle{red}) \mapsto 1, (q, \tikzcircle{red}) \mapsto 1, (q, \tikzcircle{blue}) \mapsto 1\}\). Now \(t_2\) is still disabled, because the two agents in state \(q\) have different colors. Hence \(t_1\) occurs again and leads to \(C_2:=\{(q,\tikzcircle{red}) \mapsto 2, (q, \tikzcircle{blue}) \mapsto 1\}\). At \(C_2\) only \(t_2\) is enabled and reverses the effect of \(t_1\), hence there is only one execution, cycling between \(C_1\) and \(C_2\). The execution does not have an output, as there is always an agent in \(q\) with output \(1\), but also infinitely often an agent in \(p\) with output \(0\). Since the execution does not have an output, \(C_0\) does not have an output and \(\Prot\) is not well-specified.
\end{example}

In this paper we prove that the following decision problem is undecidable:

\begin{restatable}{proposition}{PropositionVerificationUndecidable}
The following problem is undecidable:

Input: Population Protocol with unordered data \(\Prot\).

Output: Is \(\Prot\) well-specified?
\end{restatable}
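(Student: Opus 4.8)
The plan is to reduce from the halting problem of deterministic two-counter Minsky machines: given such a machine $M$ (say, started with both counters zero at its first instruction), I construct in polynomial time a UDPP $\Prot_M$ such that $\Prot_M$ is well-specified if and only if $M$ does not halt. Since the halting problem for two-counter Minsky machines is undecidable, so is its complement, and hence so is well-specification.

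The engine of the construction is a UDPP simulating $M$. One agent acts as a \emph{leader} carrying $M$'s current instruction in its state; since an initial configuration need not contain exactly one leader, $\Prot_M$ first runs a standard leader-election sub-protocol (two leaders meet, one demotes), and configurations with no leader simply do nothing. The value of counter $c \in \{1,2\}$ is encoded by the number of agents in a dedicated \emph{unit} state: an increment converts one agent from a \emph{reservoir} state into a unit agent for $c$ (using the infinite domain $\D$ to give this agent a fresh color), a decrement reverses this, and the conditional ``if $c = 0$'' is the subtle step. Because an initial configuration may contain arbitrarily many agents and arbitrarily many colors, every \emph{fixed} halting run of $M$ --- being finite, hence using bounded counter values --- can be replayed step by step by an honest execution of $\Prot_M$ on a large enough initial configuration; a configuration whose reservoir is too small merely deadlocks, which will cause no harm.

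Pairwise interaction cannot natively detect the \emph{absence} of a unit agent --- precisely the phenomenon behind the decidability of reachability for Petri-net-like models --- so this is where the data must do real work, and where most of the effort goes. The idea is to attach to each counter a \emph{zero certificate} maintained by the increment/decrement transitions, using fresh colors, so that the zero test for $c$ can fire only when $c$ is genuinely empty and no adversarial interleaving of transitions can forge a passing zero test. Making this certificate bookkeeping robust against a fully adversarial scheduler, in a model that can neither delete nor relabel agents and has only equality/disequality-tested pairwise interaction, is the main obstacle of the proof; the unbounded supply of colors is exactly what supplies the tamper-evident tokens that make faithfulness achievable. I will moreover need faithfulness in the strong, \emph{unconditional} form that a designated output-$0$ ``disagreement'' state is reachable from \emph{some} initial configuration only if $M$ halts --- not merely that spurious runs are eventually corrected --- because the definition of ``having output $b$'' quantifies over all configurations reachable from a point of the execution, not only over the execution itself.

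Finally I add the \emph{disagreement gadget}: upon reaching $M$'s halting instruction the leader spawns an agent permanently fixed at opinion $0$ while other agents sit at opinion $1$ (a trivial perpetual interaction keeps executions infinite), so that no configuration reachable thereafter has unanimous opinion and no execution through it has an output; every state other than this one has output $1$. Assembling the pieces: if $M$ halts, the initial configuration tailored to $M$'s halting run admits a fair execution that honestly reaches the halting instruction and fires the disagreement gadget, hence has no output, so $\Prot_M$ is not well-specified. If $M$ does not halt, then by unconditional faithfulness the output-$0$ state is unreachable from every initial configuration, so every reachable configuration has all agents at opinion $1$; hence every fair execution from every initial configuration has output $1$ and $\Prot_M$ is well-specified (computing the constant predicate $1$). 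This yields the reduction and the proposition.
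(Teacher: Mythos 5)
Your overall architecture (reduce from 2-counter halting, simulate the machine with a control agent and unit agents, use fresh colors to police the zero test, attach a disagreement gadget to the halt instruction) matches the paper's, but the pivotal step is specified in a form that cannot be realized, and your correctness argument depends on exactly that form. You ask for a zero test that ``can fire only when $c$ is genuinely empty'' and, downstream, for the \emph{unconditional} guarantee that the output-$0$ disagreement state is reachable only if $M$ halts. No UDPP transition can have this property: enabledness depends only on the states of the two participating agents and on whether their colors are equal, so it is monotone --- a transition enabled at $C$ is enabled at every $C' \geq C$, in particular at configurations where the unit state of $c$ is populated. Hence any zero-test transition that fires on a genuinely empty counter also fires on a non-empty one, and an adversarial scheduler can always forge a passing zero test. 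The colors cannot prevent this; what they can do (and what the paper does) is make the cheat \emph{tamper-evident}: the certificate agent for $c$ is swapped for a fresh color at each zero test, so a falsely zero-tested counter retains an agent whose color disagrees with the new certificate forever, a violation transition between those two agents is enabled at every future configuration, and \emph{fairness} forces it to eventually fire and drive the whole population into a sink with opinion $0$. Your own phrase ``tamper-evident tokens'' points at this, but tamper-evidence is strictly weaker than the tamper-prevention your final paragraph invokes: the conclusion ``the output-$0$ state is unreachable from every initial configuration'' when $M$ does not halt is false for any protocol of this shape.

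Repairing this changes both directions of the correctness proof. In the non-halting direction you cannot argue that all reachable configurations are unanimous at opinion $1$; instead you argue that every fair execution either never cheats --- so the simulation is faithful, the halt instruction is never reached, and a draining mechanism (the paper's $sink_1$) removes all opinion-$1$ agents, stabilizing the execution to $0$ --- or cheats, in which case the permanently enabled violation transition eventually fires and everything again stabilizes to $0$ via the sink. (Deadlocked initial configurations stabilize to $1$; well-specification does not require the computed predicate to be constant.) In the halting direction the witness of non-well-specification is a reachable \emph{deadlock} containing two agents of different opinions, so the execution has no output at all. Two smaller points: leader election by demotion lets two provisional leaders manipulate the shared unit agents before the election resolves, corrupting the simulation in a way demotion cannot undo --- the paper instead treats multiple control agents as one more tamper-evident violation routed to the sink; and your perpetual-interaction disagreement gadget must guarantee that an opinion-$1$ agent survives forever, which the deadlock construction gets for free.
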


\section{Proof of our main result}

We will reduce from the special halting problem for counter machines.

\begin{definition}
A 2-counter machine, where we call the counters \(x\) and \(y\), is a finite list of instructions, where every instruction is one of the following:
\begin{itemize}
\item \(\increment(c)\), \(c \in \{x,y\}\): Increment counter \(c\),
\item \(\decrement(c,k)\), \(c \in \{x,y\}\), \(k \in \N\): If \(c>0\), then decrement \(c\), otherwise goto instruction \(k\),
\item \(\goto(k)\), \(k \in \N\): Goto instruction \(k\),
\item Halt.
\end{itemize}
\end{definition}

Let \(n\) be the number of instructions, define \(Q_{CM}:=\{1,\dots, n\}\). The semantics are as follows: A configuration of a 2-counter machine is a triple \((q, x,y) \in Q_{CM} \times \N \times \N\) of an instruction number, and current values of counters \(x\) and \(y\). The initial configuration is \((1,0,0)\), i.e. we start at the first instruction with empty counters. A transition performs the obvious change to the counters and increases the current instruction number by 1.

It is well known \cite{Minsky67} that whether a 2-counter machine eventually reaches the halt instruction is undecidable. We will hence proceed by reduction from the halting problem.

\begin{restatable}{proposition}{ReductionFromCounterMachineHalting}
The halting problem for 2-counter machines polynomially reduces to the complement of the well-specification problem for population protocols with unordered data.
\end{restatable}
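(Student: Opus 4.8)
The plan is to reduce from the halting problem: given a $2$-counter machine $M$ with $n$ instructions, I would build in polynomial time a UDPP $\Prot_M$ that simulates $M$ and is well-specified if and only if $M$ does \emph{not} halt. The initial states of $\Prot_M$ are a \emph{leader} state and a \emph{reserve} state, so an initial configuration consists of one intended leader agent together with a pool of reserve agents carrying arbitrary colors. The leader holds the current instruction of $M$ and simulates it step by step: an $\increment(c)$ instruction recruits a reserve agent into a ``$c$-token'' state, a decrement with $c>0$ turns one $c$-token into an inert agent, and the multiset of $c$-tokens always represents the value of $c$. To forbid a leader from simulating forever, every step additionally burns one reserve agent irrevocably, so from a \emph{fixed} initial configuration only finitely many simulation steps are possible. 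All states occurring in the simulation have output $0$; the only output-$1$ states live in a small disagreement gadget, which the leader enters upon reaching the halt instruction. As in the example above, this gadget admits an execution oscillating between two configurations, one exhibiting an output-$0$ agent and one an output-$1$ agent, so that along it no configuration is stably $b$ for any $b$.

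For the direction ``$M$ halts $\Rightarrow\Prot_M$ not well-specified'', let the run of $M$ have length $T$ and keep both counters below $N$. I would pick an initial configuration $C_0^\ast$ with one leader and enough suitably colored reserve agents, and exhibit the execution that faithfully simulates $M$ (it never runs out of reserve agents, as $T$ is finite), reaches the halt instruction, enters the gadget, and oscillates forever. Its finite prefix is irrelevant to fairness and its infinite suffix cycles through finitely many configurations, each visited infinitely often, so it is fair; and it has no output. Hence $C_0^\ast$ has no output and $\Prot_M$ is not well-specified.

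For the direction ``$M$ does not halt $\Rightarrow\Prot_M$ well-specified'', I have to show that \emph{every} initial configuration $C_0$ — possibly with several leaders, stray agents, and adversarial colors — has output $0$ under every fair execution, and it suffices to prove that no configuration reachable from $C_0$ contains an agent in a gadget state (then every reachable configuration is a consensus on $0$ and is stably $0$). Entering the gadget requires a leader to reach the halt instruction. A leader that never performs a wrong zero test only follows the genuine run of $M$, which never reaches halt; and since each step burns a reserve agent, no leader can run forever, so in a fair execution every leader eventually becomes stuck at an instruction, and a stuck leader cannot reach halt. So the only way in is through a leader that at some $\decrement(c,k)$ takes the ``$c=0$'' branch although $c>0$, and the whole argument reduces to controlling such cheats.

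This is the heart of the construction and the step I expect to be the real obstacle: the ``$c=0$'' branch has to be available (otherwise zero tests of $M$ cannot be simulated at all), yet taking it wrongly must not let a leader race to halt undetected — something impossible in a plain population protocol, which is exactly where the unordered data is essential. I would implement an \emph{epoch} discipline: all tokens of a counter share one color, and the ``$c=0$'' branch is only possible by handing the counter (together with the instruction pointer, for the counter owned by the leader) to a \emph{fresh} reserve agent of a \emph{new} color. A $c$-token left over from a wrongly taken branch then carries the now-obsolete color, can never again be consumed by the re-colored simulation, and so survives forever as a permanent witness of the cheat; $\Prot_M$ is rigged so that such a witness, propagating through the population, must eventually collapse the disagreement gadget to a genuine consensus on $0$, so that even a cheating fair execution still has output $0$. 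The same color bookkeeping should keep distinct leaders' simulations disjoint, so extra leaders do not help either. The delicate points will be carrying the second counter safely across an epoch change of the first, ruling out witness-free wrong zero tests produced by stray garbage, and making the case analysis (running out of fuel, getting stuck, collapsing the gadget) exhaust every fair execution; once the token/color invariants are set up, the two implications above and the fairness bookkeeping should be routine, and $\Prot_M$ has size linear in $n$, so the reduction is polynomial.
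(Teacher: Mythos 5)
Your overall plan coincides with the paper's: reduce halting to non-well-specification, simulate the machine by a leader plus a reservoir of agents, and neutralize wrong zero tests by re-colouring the counter so that leftover tokens become permanent witnesses which fairness eventually turns into a forced $0$-consensus. The packaging differs in inessential ways: you make the gadget states the only output-$1$ states and exhibit an oscillating two-configuration cycle, while the paper makes the \emph{initial} states the only output-$1$ states, drains them through a sink state, and produces a disagreeing \emph{deadlock} (which is trivially fair, whereas your cycle is fair only if you additionally verify that no other transition --- e.g.\ among leftover reserve agents or tokens --- is enabled at the two cycling configurations); your fuel mechanism is harmless but does no work, since an honest leader never reaches halt anyway when $M$ does not halt.

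The genuine gap is that you assert, but do not enforce, that the agent taking over a counter after a zero test has a \emph{fresh} colour. A finite-state agent cannot test ``this colour has never occurred before'', and an adversarial initial configuration can fill the reserve pool with many agents all carrying the \emph{same} colour as the tokens left over by a cheat; then the ``re-coloured'' simulation resumes with the old colour, the leftover token is indistinguishable from a legitimate one, no witness exists, and a cheating leader reaches halt --- so the protocol fails to be well-specified even though $M$ does not halt, and the reduction breaks. This is exactly the ``witness-free wrong zero test'' you defer as a delicate point, and it is the central idea of the paper's construction rather than a routine detail: the paper introduces a second initial state $R_2$ serving as a stack of fresh colours, adds a transition sending any two $R_2$-agents of \emph{equal} colour to a sink (so a multiplicity violation is itself a forever-detectable cheat forcing output $0$), and maintains the invariant that the colours used by the computation are disjoint from those still present in $R_2$, because extracting an agent from $R_2$ removes its unique colour from the pool. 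Relatedly, you should make explicit \emph{which} transition the surviving witness keeps enabled forever --- in the paper, a token in $x$ meeting the shadow agent $\bar{x}$ under the guard $\neq$ collapses everything to a sink --- since a witness that merely ``survives forever'' does not trigger fairness; only a forever-enabled transition does.
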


\begin{proof}
We start with an outline of the basic ideas, before starting the formal proof.

\textbf{Intuition}: Except for the possibility to cheat and apply an if-zero transition at non-zero counter values, population protocols can simulate counter machines. Namely one uses states \(Q:=Q_{CM} \cup \{x,y\}\) and starts with one agent in the first instruction. Counter values of \(x\) and \(y\) are represented by numbers of agents in the corresponding states. The transitions (except for guaranteeing \(=0\)) correspond exactly to population protocol transitions taking one agent representing the active instruction, possibly changing counters and moving to the next instruction. Hence many impossibility results for population protocols or also Petri nets were obtained by building an anti-cheat mechanism, for example by Jancar in \cite{Jancar95}. We have to build a similar mechanism in our reduction. 

One main trick will be to use shadow states \(\{\bar{x}, \bar{y}\}\), which always contain exactly one agent, which stores the color that all tokens in \(x\) respectively \(y\) are supposed to have in the current configuration. Similar to Rosa-Velardo in \cite{Rosa-VelardoF11}, whenever we check for zero we change this shadow color into a ``fresh'' color, which has never occurred before. In case that we cheated with the zero test, some agents with the now wrong color will be left forever. Hence at every future configuration, these two agents have the chance to meet, and by fairness eventually do; aborting the computation. Preserving the fact that a cheat happened and eventually observing the violation by fairness is used two more times in our argument.

Our target UDPP will have two initial states \(R_1, R_2\). \(R_1\) will be used as arbitrarily large reservoirs for incrementing counters. However, \(R_2\) is the crucial initial state ensuring the above freshness: In \(R_2\), we assume that every color \(d\in \D\) initially occurs at most once, otherwise we will again have an observable violation. We ensure that this violation can be observed forever by having tokens remember their initial states. Since colors also do not change, two violating tokens will forever be able to determine that they form a violation.

The condition on \(R_2\) allows us to guarantee introducing a fresh color into the computation. Namely, we will keep the invariant that the computing agents do not have any common colors with agents from \(R_2\). This is of course only possible because introducing an agent from \(R_2\) into the computation has deleted this color from \(R_2\), since it occurs only once.

A third violation which can occur in our simulation is that we might have two counter machine instructions active at the same time. This is because all initial states might be filled with arbitrarily many agents, hence there is no way to guarantee that exactly one agent will control the machine. However, similar to above, we will guarantee that agents in states \(Q_{CM}\) will stay in \(Q_{CM}\) forever; hence eventually they will meet and recognize the violation.

Finally, well-specification will be connected to halting as follows: Input states are the only states with output \(1\), and never refilled. The first transition will put an agent into a state \(sink_1\) which will empty the initial states, and hence cause well-specification. This process can only be stopped by an agent in a halt instruction, which empties \(sink_1\). This is similar to a reduction from Petri net reachability to verification of standard population protocols by Esparza et. al. \cite{EsparzaGLM17}.

\textbf{Formal Protocol}: Let \(CM\) be a counter machine with \(n\) instructions \(i_1, \dots, i_n\). For every instruction \(i_m=\decrement(c,k)\) for \(c\in \{x,y\}, k \in \N\), we also add an intermediate instruction \(i_m'\). Let \(Q_{CM}:=\{i_1,\dots, i_n\} \cup \{i_m' \mid i_m=\decrement(c,k)\}\). We define a population protocol with unordered data \(\Prot\) over some infinite domain \(\D\) as follows:
\begin{enumerate}
\item \(Q_{main}:=Q_{CM} \cup \{x, y\} \cup \{\bar{x}, \bar{y}\} \times \{+,-,=0,>0\} \cup \{setup\} \times \{x,y\} \cup \{R_1,R_2, sink_1, sink_2, garbage\}\).
\item \(Q:=Q_{main} \times \{R_1,R_2\}\).
\item \(\delta\) will be defined step by step later.
\item \(I=\{(R_1,R_1),(R_2,R_2)\}\).
\item \(O(\{R_1,R_2\} \times \{R_1, R_2\})=1\), else \(O(p,R_i)=0\) for all \(p \in Q_{main} \setminus \{R_1,R_2\}, i \in \{1,2\}\).
\end{enumerate}

The actual states of the protocol are \(Q\), but the second component is never updated; it simply remembers the initial state. We will hence mainly refer to \(Q_{main}\) and specify all except one transition in terms of \(Q_{main}\) only. We start with this exceptional transition, for \(p,p',q,q' \in Q\) we denote \(((p,p'), \sim, (q,q')) \in \delta\) by \(p,p' \mapsto_{\sim} q, q'\), or leave away \(\sim\) if this transition is enabled in either case:

\begin{align}
\tag{Input Violation} \label{tra:Violation1}
&(p, R_2), (p', R_2), \mapsto_{=} (sink_2, R_2), (sink_2, R_2) & p, p' \in Q_{main}
\end{align}

This transition realizes that a violation occurred; multiple agents of the same color started in \(R_2\). It overwrites whatever other transition would be defined between the states in \(Q_{main}\). We continue by specifying the other violation transitions; namely if two agents are in \(\bar{x}\), or if an agent in \(\bar{x}\) meets an agent in \(x\) of different color:

\begin{align}
\tag{Counter Color Violation} \label{tra:violation2}
&(\bar{c}, b), c \mapsto_{\neq} sink_2, sink_2 & b \in \{+,-,=0,>0\},  \in \{x,y\} \\
\tag{Control State Violation} \label{tra:violation3}
&(\bar{x}, b), (\bar{x}, b') \mapsto sink_2, sink_2 & b, b' \in \{+,-,=0,>0\} \\
\tag{Convert To Sink1} \label{tra:convert1}
&sink_1, R_i \mapsto sink_1, sink_1 & i \in \{1,2\} \\
\tag{Convert To Sink2} \label{tra:convert2}
&sink_2, q \mapsto sink_2, sink_2 & q \in Q_{main}
\end{align}

The \eqref{tra:convert2} transition informs the other agents about violations, the \eqref{tra:convert1} transition guarantees that as long as \(sink_1\) is not emptied, eventually all initial states are emptied and it is always possible to reach a stable consensus for output \(0\). One could also implement a transition like \eqref{tra:violation3} for \(\bar{y}\), or for two agents in \(Q_{CM}\), but this is not necessary. Whenever there is a \(\bar{y}\) violation, then there is also a \(\bar{x}\) violation.

Next we explain the actual counter machine simulation. Initially, we set up one agent for controlling the counter machine, as well as one agent in \((\bar{x},=0)\) and \((\bar{y},=0)\) respectively. These agents are then responsible for the actual increments and decrements by moving to \(\{+,-\}\) respectively, and enable decrementing by observing a non-empty counter. Furthermore, they guarantee that all agents in \(x\) in any faithful simulation have the same color as themselves. The setup is done via the following transitions:

\begin{align}
\tag{Setup1} \label{tra:setup1}
& R_1, R_2 \mapsto (setup,x), sink_1\\
\tag{Setup2} \label{tra:setup2}
& (setup,x), R_2 \mapsto (setup,y), (\bar{x}, =0) \\
\tag{Setup3} \label{tra:setup3}
& (setup,y), R_2 \mapsto i_1, (\bar{y}, =0)
\end{align}

The increments and decrements are performed as follows, where for conciseness we specify them for \(c \in \{x,y\}\), i.e. for both counters at once. The third instruction simply detects that the counter is non-zero, enabling later decrements:

\begin{align}
\tag{Increment c} \label{tra:increC}
& (\bar{c}, +), R_1 \mapsto_{=} (\bar{c}, >0), c & c \in \{x,y\}\\
\tag{Decrement c} \label{tra:decreC}
& (\bar{c}, -), c \mapsto_{=} (\bar{c}, =0), garbage & c\in \{x,y\} \\
\tag{Detect >0} \label{tra:DetectNonZeroC}
& (\bar{c}, =0), c \mapsto_{=} (\bar{c}, >0), c & c \in \{x,y\}
\end{align}

At last we specify the transitions performing counter machine instructions. To not require an extra case for \(\goto(k)\) instructions, we define \(\nextA(m):=k\) if the next instruction is \(\goto(k)\), and \(\nextA(m):=m+1\) otherwise, i.e. we shortcut the goto.

Increment simply informs the responsible agent that he is supposed to increment.

\begin{align}
\tag{Inc(c)}
& i_m, (\bar{c}, b) \mapsto i_{\nextA(m)}, (\bar{c},+) & i_m=\increment(c), c\in \{x,y\}, b \in \{=0, >0\}
\end{align}

To simulate the decrement instructions, we need three transitions each: Two for the \(=0\) case and one for the \(> 0\) case. In the \(>0\) case, correct simulation requires the responsible agent to determine via \eqref{tra:DetectNonZeroC} that the counter is actually \(>0\); afterwards it is simply told to decrement. In the \(=0\) case, the old shadow color is removed and replaced with a color from the stack of fresh colors \(R_2\).

\begin{align}
\tag{Dec(c)} \label{tra:decrementC}
& i_m, (\bar{c}, >0) \mapsto i_{\nextA(m)}, (\bar{c}, -) & i_m=\decrement(c,k), c\in \{x,y\}, k\in \N \\
\tag{c=0, Part 1}
& i_m, (\bar{c}, =0) \mapsto i_m', garbage & i_m=\decrement(c,k), c\in \{x,y\}, k \in \N \\
\tag{c=0, Part 2}
& i_m', R_2 \mapsto i_k, (\bar{c}, =0) & i_m=\decrement(c,k), c\in \{x,y\}, k \in \N
\end{align}

We have finished specifying the protocol after one more transition, causing the halt. The idea is that \eqref{tra:setup3} has initially filled \(sink_1\); this causes well-specification by removing input agents. After reaching the halt instruction we empty \(sink_1\) to reach a deadlock without matching opinions.

\begin{align}
\tag{Cause Deadlock} \label{tra:DeadlockIfHalt}
& i_m, sink_1 \mapsto i_m, garbage & i_m=halt
\end{align}

\textbf{Correctness}: First assume that the counter machine \(CM\) does halt.

Let \(k\) be the number of steps \(CM\) requires to halt. Let \(C_0\) be an initial configuration with \(C_0(d_i,R_1)\geq 2k+7\) agents in \(2k\) different colors \(d_1, \dots, d_{2k}\). Moreover, assume \(C_0(d_i, R_2)=1\) for at least \(2k\) different colors \(d_1', \dots, d_{2k}'\) and assume that \(C_0(d, R_2) \leq 1\) for all \(d\in \D\), i.e. there is no input violation.

We perform the following transition sequence \(\sigma\): We start with \eqref{tra:setup1}, \eqref{tra:setup2}, \eqref{tra:setup3}, followed by the correct counter machine simulation until it halts. Observe that any simulating transition takes at most \(2\) agents out of \(R_1\) and at most one new color from \(R_2\), hence we do not run out of agents in \(R_1, R_2\). Call the reached configuration \(C\). We have \(C(d, R_1) \geq 1\) for some color \(d\), since we used at most \(2k+6\) agents so far. Apply transition \eqref{tra:convert1} until no agents are in state \(R_2\), such that transition \eqref{tra:setup1} becomes disabled. Now apply transition \eqref{tra:DeadlockIfHalt} until the state \(sink_1\) is empty. The reached configuration is a deadlock, and there is both an agent in \(R_1\) and an agent in the halt instruction. Hence the configuration does not have an output. Hence \(\Prot\) is not well-specified.

Now assume that \(CM\) does \emph{not halt}. We start with four simple but important observations: 

\begin{enumerate}
\item Agents who enter \(\{\bar{x}, \bar{y}\} \times \{+,-,=0,>0\}\) are always from \(R_2\), a previously unused color.
\item Agents who enter or leave \(x,y\) are always the same color as the corresponding \(\bar{x}, \bar{y}\).
\item State \(sink_1\) can only be emptied using \eqref{tra:convert2} or \eqref{tra:DeadlockIfHalt}.
\item The only states with output \(1\) are the input states \(R_1\) and \(R_2\) which have no incoming transitions, hence any \(0\)-consensus is automatically stable.
\end{enumerate}

We have to prove that every initial configuration has a unique output. Let \(C_0\) be any initial configuration. Assume that \eqref{tra:Violation1} is disabled at \(C_0\). Otherwise it will always be enabled, hence eventually occur and lead to a configuration with all agents in \(sink_2\), having output 0. 

Under this assumption, the only transition which can be enabled at \(C_0\) is \eqref{tra:setup1}. If it is not enabled, then \(C_0\) is a deadlock with output 1, since every initial state has output 1. Hence assume that \eqref{tra:setup1} is enabled. 

In this case, we claim that every execution \(\pi=(C_0,C_1, \dots)\) stabilizes to output 0. Assume that in \(\pi\) no violation occurs, i.e. \(C_m(sink_2)=0\) for all \(m \in \N\), otherwise \(\pi\) has output \(0\) via transition \eqref{tra:convert2}. Since no other transition is enabled, first transition \eqref{tra:setup1} occurs, causing \(C_1(sink_1)>0\). Since \(sink_1\) removes all agents of opinion \(1\) via \eqref{tra:convert1}, if we do not empty it, then by observation 4 we stabilize to opinion \(0\) as required. By observation 3, \(\pi\) can only empty \(sink_1\) by eventually placing a token in a halting instruction. Hence it suffices to show that no agent ever moves to a halting instruction, i.e. \(C_m(i_k)=0\) for all \(m\in \N\) and halt instructions \(i_k\). 

Before any other transitions are enabled, \eqref{tra:setup2} and \eqref{tra:setup3} have to occur. In fact they occur exactly once, as otherwise there would be agents in \((\bar{x},b)\) and \((\bar{x},b')\) for some \(b,b'\) and \eqref{tra:violation3} is infinitely often enabled, and would eventually occur, contradiction to no violations.

This implies that there is always exactly one active counter machine instruction. Assume that eventually, some instruction is executed wrong; the only possibility for this is that at some \(C_m\) the goto \(k\) part of a \(\decrement(c,k)\) instruction was applied with \(C_m(c,d)>0\) for some \(d\in \D\). However, by observations 1 and 2, we then have \(C_l(c,d)>0\) for all \(l \geq m\), hence \eqref{tra:violation2} is infinitely often enabled, and would eventually occur, contradiction.

Hence the counter machine simulation has to be faithful to not cause violations. Since \(CM\) does not halt, we have \(C_m(i_k)=0\) for all \(m \in \N\) and halt instructions \(i_k\) as required. Remember that this was sufficient as \(sink_1\) then removes all agents of opinion 1.
\end{proof}

\bibliography{Verification-of-UDPP-is-undecidable.bib}

\end{document}